\documentclass[journal]{IEEEtran}
\usepackage{amsmath,amssymb,amsthm}
\usepackage{microtype}

\newtheorem{theorem}{Theorem}
\newtheorem{lemma}{Lemma}
\newtheorem{proposition}{Proposition}
\newtheorem{corollary}{Corollary}
\theoremstyle{definition}
\newtheorem{definition}{Definition}
\newtheorem{example}{Example}
\theoremstyle{remark}
\newtheorem{remark}{Remark}

\newcommand{\F}{\mathbb{F}}
\newcommand{\R}{\mathbb{R}}
\newcommand{\N}{\mathbb{N}}

\begin{document}

\title{Trellis State Complexity as an Exact\\ Tropical Factorization Rank}

\author{Karthik~Sheshadri%
\thanks{K.~Sheshadri (e-mail: karthiksheshadri217@gmail.com).}}

\markboth{Submitted to the IEEE Transactions on Information Theory}%
{Sheshadri: Trellis State Complexity as an Exact Tropical Factorization Rank}

\maketitle

\begin{abstract}
Let $C\subseteq\F_2^m$ be a binary linear code and let $[m]=L\sqcup R$ be a
bipartition of its coordinates. The \emph{conditional decoding matrix} of $C$
at this cut is the matrix $W$ indexed by $\F_2^{L}\times\F_2^{R}$ whose entry
$W(x_L,x_R)$ is the coset-leader weight $d\bigl((x_L,x_R),C\bigr)$, the
minimum Hamming distance from the word $(x_L,x_R)$ to the code. We prove that
the min-plus factorization rank (Barvinok rank) of $W$, and likewise its
tropical rank, equal $2^{s}$ exactly, where $s=\dim C-\dim C_L-\dim C_R$ is
the classical state complexity of the minimal trellis of $C$ at the cut. The
upper bound is a two-party reading of Viterbi decoding on the minimal
trellis; the contribution is the matching lower bound, which holds against
arbitrary min-plus factorizations rather than only sequential trellis
realizations, and is obtained from an explicit $2^{s}\times 2^{s}$ tropically
nonsingular submatrix built from a transversal of codewords. Specializing $C$
to the cut space of a graph identifies $W$ with the conditional ground-state
energy of Ising signings (the frustration index), and yields natural graph
families whose conditional matrices have min-plus rank exponential in the
number of vertices; for these families we also record the contrasting local
statement that all bounded-radius views of a signing are switching-trivial,
so the exponential rank is carried entirely by non-local structure. We note
explicitly that this rank measures representational incompressibility, not
computational hardness: planar families attain the same exponential rank
while their ground states are computable in polynomial time.
\end{abstract}

\begin{IEEEkeywords}
Trellis complexity, minimal trellis, state complexity, Viterbi decoding,
tropical semiring, min-plus algebra, Barvinok rank, coset leader, signed
graphs, Ising model, frustration index.
\end{IEEEkeywords}

\IEEEpeerreviewmaketitle

\section{Introduction}

\IEEEPARstart{T}{wo} classical bodies of theory measure, in different
vocabularies, how much information must cross a cut when one optimizes over a
binary linear code.

On the coding-theory side, the \emph{minimal trellis} of a code
$C\subseteq\F_2^m$ realizes $C$ as the path set of a layered graph, and the
size of the state space at a given cut is governed by the
\emph{state complexity}
\[
  s \;=\; \dim C-\dim C_L-\dim C_R ,
\]
where $C_L$ and $C_R$ are the subcodes supported on the two sides
(Muder~\cite{Muder88}; Forney~\cite{Forney94}; Kasami, Takata, Fujiwara and
Lin~\cite{KTFL93}; see Vardy~\cite{Vardy98} for a survey). Viterbi decoding
propagates $2^{s}$ conditional values across the cut, and among
\emph{trellis} realizations this is optimal.

On the tropical side, the \emph{min-plus factorization rank} (Barvinok rank,
factor rank) of a matrix $W$ is the least $r$ such that
$W(x,y)=\min_{t\le r}\bigl(u_t(x)+v_t(y)\bigr)$; together with the tropical
rank and the Kapranov rank it is one of the standard, mutually inequivalent
rank notions of tropical linear algebra introduced and compared by Develin,
Santos and Sturmfels~\cite{DSS05}. Computing the Barvinok rank is NP-hard
already for fixed target rank (Shitov~\cite{Shitov14}), so explicit natural
families with exactly known rank are of independent interest. The Barvinok
rank also has direct algorithmic content for optimization: distance matrices
of bounded factor rank admit polynomial-time travelling-salesman algorithms
(``TSP with warehouses''; see the discussion and references
in~\cite{Shitov14}).

This paper observes that the two theories meet in an exact identity. Attach
to the pair (code, cut) the \emph{conditional decoding matrix}
$W(x_L,x_R)=d\bigl((x_L,x_R),C\bigr)$: the table of optimal values of
minimum-distance decoding, doubly indexed by the two halves of the received
word. Viterbi decoding through the minimal trellis exhibits a min-plus
factorization of $W$ with $2^{s}$ terms. Our main result
(Theorem~\ref{thm:main}) shows that no factorization whatsoever does better:
for every binary linear code and every coordinate bipartition, the min-plus
factorization rank and the tropical rank of $W$ are both exactly $2^{s}$.

The lower bound is proved against arbitrary min-plus factorizations---there
is no sequentiality, layering, or state-machine assumption---by exhibiting an
explicit $2^{s}\times 2^{s}$ submatrix with zero diagonal and positive
off-diagonal entries, which is tropically nonsingular, together with a
two-line crossing argument showing that each rank-one term of any
factorization can be tight on at most one diagonal cell. Thus the minimal
trellis is optimal in a model considerably wider than trellises: the state
space is not merely the smallest \emph{sequential} realization of
conditional decoding, it is the smallest possible min-plus data structure of
any shape.

Specializing $C$ to the cut (cocycle) space of a connected graph $G$
identifies the coset-leader weight with the \emph{frustration index} of a
signing of $G$---equivalently, the ground-state energy of an Ising system
with $\pm1$ couplings---and $W$ with the conditional ground-state matrix
across an edge bipartition. For a natural class of bipartitions
(Corollary~\ref{cor:spoked}) the state complexity evaluates to $|B|-1$, where
$B$ is the ``far side'' of the vertex set, so the conditional matrix has
min-plus rank $2^{|B|-1}$, exponential in the number of vertices. We verify
the construction computationally on the Petersen graph and further random
instances (Remark~\ref{rem:verify}).

Two accompanying observations delimit what the theorem does and does not
say. First (Proposition~\ref{prop:local}), on graphs of girth $g$ every
radius-$r$ view of a signing with $2r+1<g$ is switching-equivalent to the
all-positive signing, so \emph{every} bounded-radius local statistic is
constant across all $2^{|E|}$ signings while the conditional matrix attains
the maximum rank permitted by its state complexity: the rank is carried
entirely by non-local structure. Second (Remark~\ref{rem:planar}),
exponential conditional rank is compatible with polynomial-time solvability:
planar prism graphs attain the bound of Corollary~\ref{cor:spoked} while
planar Ising ground states are computable in polynomial
time~\cite{Barahona82}. The conditional rank measures the incompressibility
of conditional \emph{values} in the min-plus model; it is not a hardness
measure, and we state this explicitly to forestall over-reading.

\section{Preliminaries}

\subsection{Codes and Cuts}
Throughout, $C\subseteq\F_2^m$ is a linear code and $[m]=L\sqcup R$ is a
fixed bipartition of the coordinates. For $x\in\F_2^m$ we write $x=(x_L,x_R)$
for the two restrictions and $|x|$ for the Hamming weight. The
\emph{coset-leader weight} of $x$ is
\[
  w_C(x)\;=\;d(x,C)\;=\;\min_{c\in C}\,|x\oplus c| .
\]
The \emph{past and future subcodes} at the cut are
\[
  C_L=\{c\in C: c_R=0\},\qquad C_R=\{c\in C: c_L=0\},
\]
which we freely identify with their supports in $\F_2^{L}$ and $\F_2^{R}$.
Write $P_R(C)=\{c_R:c\in C\}\subseteq\F_2^{R}$ for the projection; the
kernel of $c\mapsto c_R$ on $C$ is $C_L$, so
$\dim P_R(C)=\dim C-\dim C_L$, and symmetrically for $P_L$. The
\emph{state complexity} of $C$ at the cut is
\begin{align*}
  s\;=\;s(C;L,R)&=\dim C-\dim C_L-\dim C_R\\
   &=\dim P_R(C)-\dim C_R\\
   &=\dim P_L(C)-\dim C_L .
\end{align*}
This is the dimension of the state space of the minimal trellis of $C$ at
this cut position~\cite{Muder88,Forney94,KTFL93,Vardy98}; we will not need
any trellis machinery, only the dimension formula above, which we take as
the definition of $s$.

\subsection{The Conditional Decoding Matrix}
\begin{definition}
The \emph{conditional decoding matrix} of $C$ at the cut $(L,R)$ is the
matrix $W\in\N^{\F_2^{L}\times\F_2^{R}}$ with entries
\[
  W(x_L,x_R)\;=\;w_C\bigl((x_L,x_R)\bigr).
\]
\end{definition}
Every entry of $W$ is finite (indeed at most $m$).

\subsection{Tropical Ranks}
A matrix $M$ has \emph{min-plus factorization rank} (also \emph{Barvinok
rank} or \emph{factor rank}~\cite{DSS05,Shitov14}) at most $r$ if there are
functions $u_1,\dots,u_r$ on its row index set and $v_1,\dots,v_r$ on its
column index set, with values in $\R\cup\{+\infty\}$, such that
\[
  M(x,y)\;=\;\min_{1\le t\le r}\,\bigl(u_t(x)+v_t(y)\bigr)
  \qquad\text{for all }x,y .
\]
Allowing $+\infty$ only strengthens our lower bound. A square matrix $N$ of
order $r$ is \emph{tropically nonsingular} if the minimum in the tropical
determinant $\min_{\sigma\in S_r}\sum_i N(i,\sigma(i))$ is attained by a
unique permutation, and the \emph{tropical rank} of $M$ is the largest order
of a tropically nonsingular square submatrix~\cite{DSS05}. By~\cite{DSS05}
the tropical rank is at most the Kapranov rank, which is at most the
Barvinok rank; we will pin both ends directly and use this chain only for
the intermediate notion.

\section{The Door Identity: an Upper Bound From the State Space}

The state space at the cut is the quotient $T=P_R(C)/C_R$, of size $2^{s}$.
The upper bound is the observation that conditional decoding factors through
$T$; it is a two-party reading of Viterbi decoding, and we include the short
proof to keep the paper self-contained.

\begin{lemma}[fiber invariance]\label{lem:fiber}
For $r\in P_R(C)$ let $F(r)=\{c_L : c\in C,\ c_R=r\}\subseteq\F_2^{L}$.
Then $F(r\oplus k)=F(r)$ for every $k\in C_R$. Consequently $F$ and the
function $D(x_L,r)=d\bigl(x_L,F(r)\bigr)$ depend only on the class
$\tau=r\oplus C_R\in T$.
\end{lemma}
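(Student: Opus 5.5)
The plan is to prove the two inclusions $F(r)\subseteq F(r\oplus k)$ and $F(r\oplus k)\subseteq F(r)$ directly from linearity of $C$, and then to deduce the consequences formally.

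Fix $r\in P_R(C)$ and $k\in C_R$. Since $C_R$ is identified with its support, there is a codeword $\kappa\in C$ with $\kappa_L=0$ and $\kappa_R=k$. Given $a\in F(r)$, choose $c\in C$ with $c_L=a$ and $c_R=r$. Then $c\oplus\kappa\in C$ by linearity, and
\[
  (c\oplus\kappa)_L=a\oplus 0=a,\qquad (c\oplus\kappa)_R=r\oplus k .
\]
This shows that $a\in F(r\oplus k)$, so $F(r)\subseteq F(r\oplus k)$. Along the way we also note that $r\oplus k\in P_R(C)$, so $F(r\oplus k)$ is defined. For the reverse inclusion, apply the same argument to $r'=r\oplus k$ with the same $k$, using $r'\oplus k=r$ over $\F_2$. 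This gives equality.

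For the consequences, suppose $\tau=r\oplus C_R=r'\oplus C_R$ for some $r,r'\in P_R(C)$. Then $k=r\oplus r'\in C_R$, and the equality just shown gives $F(r')=F(r)$. So $F$ descends to a well-defined function on $T$. The function $D(x_L,r)=d(x_L,F(r))=\min_{a\in F(r)}|x_L\oplus a|$ depends on $r$ only through the set $F(r)$. Hence $D$ also depends only on $\tau$. The minimum is over a nonempty set, since $r\in P_R(C)$ guarantees that some $c\in C$ has $c_R=r$, so $D$ is finite.

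No step is a real obstacle. The only point requiring care is the identification of $C_R$ with its support: we must lift $k\in\F_2^R$ back to the codeword $(0,k)\in C$ before using closure of $C$ under addition.
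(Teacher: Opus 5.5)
Your proof is correct and is essentially the paper's argument: the paper lifts $k$ to the codeword $\kappa\in C$ with $\kappa_L=0$, $\kappa_R=k$, and notes that $c\mapsto c\oplus\kappa$ is a left-part-preserving bijection between $\{c\in C:c_R=r\}$ and $\{c\in C:c_R=r\oplus k\}$. Your two inclusions are that same translation applied in each direction, and your extra remarks (that $r\oplus k\in P_R(C)$, well-definedness on classes, and nonemptiness of $F(r)$) are sound additions.
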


\begin{IEEEproof}
If $\kappa\in C$ has $\kappa_L=0$ and $\kappa_R=k$, then
$c\mapsto c\oplus\kappa$ is a bijection between $\{c\in C:c_R=r\}$ and
$\{c\in C:c_R=r\oplus k\}$ preserving left parts.
\end{IEEEproof}

\begin{proposition}[door identity]\label{prop:doors}
For every $x_L\in\F_2^{L}$ and $x_R\in\F_2^{R}$,
\[
  W(x_L,x_R)\;=\;\min_{\tau\in T}\;\Bigl(\,D(x_L,\tau)\;+\;d(x_R,\tau)\Bigr),
\]
where $d(x_R,\tau)=\min_{r\in\tau}|x_R\oplus r|$ is the distance from $x_R$
to the coset $\tau\subseteq\F_2^{R}$. In particular the min-plus
factorization rank of $W$ is at most $|T|=2^{s}$.
\end{proposition}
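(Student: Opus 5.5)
The plan is to prove the identity by unpacking the minimum over $C$ according to the right half of each codeword, and then grouping those right halves by their class in $T$. Fix $x=(x_L,x_R)$. Every $c\in C$ splits as $c=(c_L,c_R)$, and the Hamming weight is additive across the cut:
\[
|x\oplus c|=|x_L\oplus c_L|+|x_R\oplus c_R|.
\]
Hence
\[
W(x_L,x_R)=\min_{r\in P_R(C)}\Bigl(\min_{\ell\in F(r)}|x_L\oplus\ell|+|x_R\oplus r|\Bigr)=\min_{r\in P_R(C)}\bigl(D(x_L,r)+|x_R\oplus r|\bigr).
\]
Here I use that $\{c\in C: c_R=r\}$ is nonempty exactly when $r\in P_R(C)$, and that its left parts form $F(r)$.

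Next I partition $P_R(C)$ into the cosets $\tau\in T$. By Lemma~\ref{lem:fiber}, $D(x_L,r)$ is constant on each class $\tau$; call this value $D(x_L,\tau)$. It can therefore be pulled out of the inner minimum over $r\in\tau$. What remains is $\min_{r\in\tau}|x_R\oplus r|=d(x_R,\tau)$, and this gives the displayed identity.

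For the rank bound, set $u_\tau(x_L)=D(x_L,\tau)$ and $v_\tau(x_R)=d(x_R,\tau)$. Both are finite real-valued functions, since each $F(r)$ is nonempty and each $\tau$ is nonempty. This yields a min-plus factorization with $|T|=2^{s}$ terms.

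There is no real obstacle. The only point needing care is the interchange of minima: the minimum over $C$ must equal the iterated minimum over $\tau$, then over $r\in\tau$, then over $\ell\in F(r)$. This holds because the map $c\mapsto(c_R\oplus C_R,\,c_R,\,c_L)$ identifies $C$ bijectively with the set of triples $(\tau,r,\ell)$ satisfying $r\in\tau$ and $\ell\in F(r)$. Checking this bijection amounts to checking the definitions of $P_R(C)$ and $F$.
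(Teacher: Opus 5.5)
Your proposal is correct and is essentially the paper's proof. Like the paper, you group the minimum over $C$ by the right part $r=c_R$, use Lemma~\ref{lem:fiber} to pull $D(x_L,\tau)$ out of each class, and read off the $|T|$ rank-one terms $u_\tau=D(\cdot,\tau)$ and $v_\tau=d(\cdot,\tau)$; your explicit bijection $c\mapsto(c_R\oplus C_R,\,c_R,\,c_L)$ and the finiteness check only make precise what the paper leaves implicit.
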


\begin{IEEEproof}
Group the defining minimum
$W(x)=\min_{c\in C}(|x_L\oplus c_L|+|x_R\oplus c_R|)$ by the right part
$r=c_R\in P_R(C)$: the inner minimum over left parts is $D(x_L,r)$, giving
$W(x)=\min_{r\in P_R(C)}\bigl(D(x_L,r)+|x_R\oplus r|\bigr)$. By
Lemma~\ref{lem:fiber}, $D(x_L,\cdot)$ is constant on classes, so collecting
each class and minimizing $|x_R\oplus r|$ within it yields the identity.
Each class contributes one rank-one term $u_\tau(x_L)+v_\tau(x_R)$ with
$u_\tau=D(\cdot,\tau)$ and $v_\tau=d(\cdot,\tau)$.
\end{IEEEproof}

\section{The Main Theorem: a Matching Lower Bound}

\begin{theorem}\label{thm:main}
For every binary linear code $C\subseteq\F_2^m$ and every bipartition
$[m]=L\sqcup R$, the min-plus factorization rank of the conditional
decoding matrix $W$ equals $2^{s}$, where $s=\dim C-\dim C_L-\dim C_R$.
Moreover the tropical rank and Kapranov rank of $W$ also equal $2^{s}$.
\end{theorem}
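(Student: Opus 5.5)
Upper bound: Proposition~\ref{prop:doors} gives Barvinok rank at most $2^{s}$. The chain tropical $\le$ Kapranov $\le$ Barvinok from~\cite{DSS05} then reduces everything to one claim: $W$ contains a $2^{s}\times 2^{s}$ submatrix that is tropically nonsingular. This forces the tropical rank, and hence all three ranks, to equal $2^{s}$. Since the abstract insists the lower bound hold directly against arbitrary factorizations, including ones with $+\infty$ entries, I will also give a direct crossing argument for the Barvinok rank.

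\emph{Construction.} Choose codewords $c^{(1)},\dots,c^{(N)}\in C$, $N=2^{s}$, whose right parts $c^{(i)}_R$ form a transversal of $T=P_R(C)/C_R$. Take rows $x^{(i)}_L=c^{(i)}_L$ and columns $x^{(j)}_R=c^{(j)}_R$, and let $M(i,j)=W(c^{(i)}_L,c^{(j)}_R)$.

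\emph{The diagonal.} Each diagonal entry is $0$, because $(c^{(i)}_L,c^{(i)}_R)\in C$.

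\emph{The off-diagonal.} Suppose $M(i,j)=0$ with $i\ne j$. Then $(c^{(i)}_L,c^{(j)}_R)\in C$, and adding $c^{(i)}$ gives $(0,c^{(i)}_R\oplus c^{(j)}_R)\in C$. So $c^{(i)}_R\oplus c^{(j)}_R\in C_R$, which contradicts the transversal choice. Hence every off-diagonal entry is at least $1$.

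\emph{Nonsingularity.} Every entry is nonnegative, the identity permutation has sum $0$, and any other permutation uses an off-diagonal entry and so has sum $\ge1$. The minimum is therefore attained uniquely, and $M$ is tropically nonsingular.

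\emph{Crossing argument.} Suppose $W=\min_t(u_t+v_t)$ with fewer than $N$ terms. Each diagonal cell $(i,i)$ is attained by some term $t$, meaning $u_t(i)+v_t(i)=0$; in particular both values are finite. By pigeonhole, some term $t$ is tight on two cells $i\ne j$. Then
\[
u_t(i)+v_t(j)+u_t(j)+v_t(i)=0,
\]
and each of $u_t(i)+v_t(j)$ and $u_t(j)+v_t(i)$ is at least $M(i,j)\ge1$ or $M(j,i)\ge1$ respectively. Their sum is therefore at least $2$, a contradiction. So the Barvinok rank is at least $N$.

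The main obstacle is the off-diagonal positivity. It depends on choosing representatives that are simultaneously codewords, which makes the diagonal zero, and that are distinct modulo $C_R$. Passing through Lemma~\ref{lem:fiber}'s quotient $T$ makes this clean. The remaining steps are routine.
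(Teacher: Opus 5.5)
Your proposal is correct and follows the paper's proof essentially step for step. You use the door identity for the upper bound, the transversal block of codewords with zero diagonal and off-diagonal entries at least $1$ (Lemma~\ref{lem:block}), the two-cell crossing argument against arbitrary, possibly $+\infty$-valued, factorizations (Lemma~\ref{lem:crossing}), and tropical nonsingularity of that block to squeeze the tropical and Kapranov ranks via the chain of~\cite{DSS05}; the only difference is cosmetic, in that you bound non-identity permutations below by $1$ rather than $2$, which suffices equally for uniqueness.
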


The proof rests on an explicit block and a crossing argument.

\begin{lemma}[transversal block]\label{lem:block}
Let $r_1,\dots,r_N\in P_R(C)$ be a transversal of $T=P_R(C)/C_R$, so
$N=2^{s}$, and for each $i$ choose $c^i\in C$ with $c^i_R=r_i$. Put
$x^i_L=c^i_L$ and $x^i_R=r_i$. Then
\[
  W(x^i_L,x^i_R)=0\ \ \text{for all }i,\qquad
  W(x^i_L,x^j_R)\ge 1\ \ \text{for all }i\ne j .
\]
\end{lemma}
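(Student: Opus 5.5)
The plan is to verify the two claims directly from the definition $W(x)=\min_{c\in C}|x\oplus c|$, using only that $C$ is a linear code and that the $r_i$ lie in distinct classes of $T=P_R(C)/C_R$.

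\emph{Diagonal entries.} By construction, $(x^i_L,x^i_R)=(c^i_L,c^i_R)=c^i\in C$. Taking $c=c^i$ in the defining minimum gives $|x^i\oplus c^i|=0$. Since all weights are nonnegative, $W(x^i_L,x^i_R)=0$.

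\emph{Off-diagonal entries.} Fix $i\ne j$. Because $W$ takes values in $\N$, it suffices to show $W(x^i_L,x^j_R)\neq 0$, that is, $(c^i_L,r_j)\notin C$. I argue by contradiction. Suppose $c':=(c^i_L,r_j)\in C$. By linearity $c^i\oplus c'\in C$, and this word has left part $c^i_L\oplus c^i_L=0$ and right part $r_i\oplus r_j$. Hence $c^i\oplus c'\in C_R$, so $r_i\oplus r_j\in C_R$. This says $r_i$ and $r_j$ lie in the same class of $T$, which contradicts the choice of $r_1,\dots,r_N$ as a transversal with $i\ne j$. Therefore $(x^i_L,x^j_R)\notin C$, its distance to $C$ is a positive integer, and $W(x^i_L,x^j_R)\ge 1$.

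\emph{Size of the transversal.} The equality $N=2^{s}$ follows because $|T|=2^{\dim P_R(C)-\dim C_R}$, and $\dim P_R(C)-\dim C_R=s$ by the dimension formula in the preliminaries.

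There is no real obstacle in this argument. The only point needing care is that the off-diagonal bound uses the transversal property through the subcode $C_R$ (left part zero), not $C_L$. This is why the left parts are taken from codewords lying over $r_i$, which makes the difference codeword vanish on $L$.
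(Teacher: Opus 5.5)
Your proof is correct and follows the paper's argument. The diagonal entries vanish because $(x^i_L,x^i_R)=c^i\in C$, and if $(c^i_L,r_j)$ were a codeword, linearity would put $(0,r_i\oplus r_j)$ in $C_R$, contradicting the transversal property (the paper phrases this as writing $(x^i_L,x^j_R)=c^i\oplus(0,r_i\oplus r_j)$).
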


\begin{IEEEproof}
The diagonal entry is $w_C(c^i)=0$. For $i\ne j$,
$(x^i_L,x^j_R)=c^i\oplus(0,\,r_i\oplus r_j)$. If this word were in $C$,
then $(0,\,r_i\oplus r_j)\in C$, hence $(0,r_i\oplus r_j)\in C_R$, hence
$r_i\oplus r_j\in C_R$, contradicting that $r_i,r_j$ lie in distinct
classes of the transversal. So the word is not a codeword and its
coset-leader weight is at least $1$.
\end{IEEEproof}

\begin{lemma}[crossing]\label{lem:crossing}
Let $N\times N$ values $W(i,j)$ satisfy $W(i,i)=0$ and $W(i,j)\ge 1$ for
$i\ne j$. Then every min-plus factorization
$W(i,j)=\min_t\bigl(u_t(i)+v_t(j)\bigr)$ has at least $N$ terms.
\end{lemma}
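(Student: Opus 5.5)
The plan is a pigeonhole argument: I will show that each rank-one term $u_t(i)+v_t(j)$ can realize the minimum on at most one diagonal cell. Since every diagonal cell must be realized by some term, this forces at least $N$ terms.

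First, fix a factorization with terms indexed by $t$. For each diagonal index $i$, the minimum $\min_t(u_t(i)+v_t(i))=W(i,i)=0$ is attained by some term. Because $W(i,i)=0$ is finite, the attaining term has $u_t(i)+v_t(i)=0$ with both summands finite; allowing the value $+\infty$ therefore causes no trouble. Choose one such attaining term $t(i)$ for each $i$. This defines a map $i\mapsto t(i)$ from $[N]$ to the set of terms, and it suffices to prove that this map is injective.

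Second, comes the crossing step, which I expect to be the only real content. Suppose $t(i)=t(j)=t$ with $i\ne j$. Then $u_t(i)+v_t(i)=0$ and $u_t(j)+v_t(j)=0$, with all four quantities finite reals. Since every term is an upper bound for the minimum, we have $W(i,j)\le u_t(i)+v_t(j)$ and $W(j,i)\le u_t(j)+v_t(i)$. Adding these two inequalities and regrouping gives
\[
W(i,j)+W(j,i)\le \bigl(u_t(i)+v_t(i)\bigr)+\bigl(u_t(j)+v_t(j)\bigr)=0 .
\]
This contradicts the hypothesis $W(i,j)+W(j,i)\ge 2$. Hence the map $i\mapsto t(i)$ is injective, and there are at least $N$ terms.

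The only care needed is the finiteness bookkeeping in the regrouping step, so that no expression of the form $\infty-\infty$ arises. This is handled by the observation in the first step that diagonal tightness forces both $u_t$ and $v_t$ to be finite at the relevant indices. The argument also uses only $W(i,j)+W(j,i)>0$, which is weaker than the stated hypothesis, so the lemma holds under the stated conditions with room to spare.
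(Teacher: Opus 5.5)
Your proposal is correct and is essentially the paper's proof: each diagonal cell has a tight term with finite values, and a term tight at two diagonal cells $(i,i),(j,j)$ would give $0=\bigl(u_t(i)+v_t(j)\bigr)+\bigl(u_t(j)+v_t(i)\bigr)\ge W(i,j)+W(j,i)\ge 2$, so the tight terms must be distinct. Your explicit finiteness bookkeeping and your observation that only $W(i,j)+W(j,i)>0$ is needed are accurate refinements of that same argument.
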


\begin{IEEEproof}
Since the minimum is attained everywhere, each diagonal cell $(i,i)$ has a
\emph{tight} term $t$ with $u_t(i)+v_t(i)=0$ (in particular both values are
finite), and every term obeys $u_t(i)+v_t(j)\ge W(i,j)$ for all $i,j$.
Suppose one term $t$ is tight at two diagonal cells $(i,i)$ and $(j,j)$
with $i\ne j$. Then
\begin{align*}
  0&=\bigl(u_t(i)+v_t(i)\bigr)+\bigl(u_t(j)+v_t(j)\bigr)\\
   &=\bigl(u_t(i)+v_t(j)\bigr)+\bigl(u_t(j)+v_t(i)\bigr)\\
   &\ge W(i,j)+W(j,i)\ \ge\ 2,
\end{align*}
a contradiction. So the $N$ diagonal cells require $N$ distinct tight
terms.
\end{IEEEproof}

\begin{IEEEproof}[Proof of Theorem~\ref{thm:main}]
Proposition~\ref{prop:doors} gives factorization rank at most $2^{s}$.
Lemmas~\ref{lem:block} and~\ref{lem:crossing} give factorization rank at
least $2^{s}$, since a factorization of $W$ restricts to a factorization of
any submatrix. For the tropical rank, the block of Lemma~\ref{lem:block} is
tropically nonsingular: the identity permutation contributes $0$ to the
tropical determinant, while any other permutation moves at least two
indices and hence contributes at least $2$; so the minimum is attained
uniquely. Therefore the tropical rank is at least $2^{s}$, and by the chain
tropical $\le$ Kapranov $\le$ Barvinok~\cite{DSS05} all three ranks are
squeezed to exactly $2^{s}$.
\end{IEEEproof}

\begin{remark}
The lower bound uses no structure of the factors beyond validity and
tightness: no layering, no sequential state evolution, no linearity. Thus
the minimal-trellis state count, classically optimal among trellis
realizations~\cite{Muder88,Vardy98}, is optimal among arbitrary min-plus
factorizations of the conditional value table.
\end{remark}

\begin{remark}
The proof is insensitive to the binary alphabet in all but notation: for a
linear code over $\F_q$ with the Hamming metric the same statements hold
with $q^{s}$ in place of $2^{s}$, by the same two lemmas. We state the
binary case for concreteness.
\end{remark}

\section{Graphs, Signings and Frustration}

Let $G=(V,E)$ be a connected graph with $n=|V|$ vertices, and let
$B_G\subseteq\F_2^{E}$ be its cut (cocycle) space, of dimension $n-1$: the
span of the vertex cuts $\delta(S)$, $S\subseteq V$. Identify a signing
$\sigma:E\to\{\pm1\}$ with the indicator $\omega\in\F_2^E$ of its negative
edges. A choice of spins $\varepsilon:V\to\{\pm1\}$ leaves exactly the edges
of $\omega\oplus\delta(S)$ frustrated, where $S=\varepsilon^{-1}(-1)$; hence
the \emph{frustration index} of the signing---the ground-state energy of the
Ising system with couplings $\sigma$---is the coset-leader weight
\[
  \ell(\omega)\;=\;\min_{S\subseteq V}\,|\omega\oplus\delta(S)|
  \;=\;d(\omega,B_G)\;=\;w_{B_G}(\omega).
\]
For an edge bipartition $E=E_L\sqcup E_R$, the conditional decoding matrix
of $B_G$ is the \emph{conditional ground-state matrix}: the table of
optimal energies as a function of the two halves of the coupling pattern.
Theorem~\ref{thm:main} applies verbatim, with
\[
  C_L=\{\delta(S)\subseteq E_L\},\qquad C_R=\{\delta(S)\subseteq E_R\},
\]
the cuts supported on one side.

A natural family of bipartitions makes the state complexity explicit.

\begin{corollary}\label{cor:spoked}
Let $V=A\sqcup B$ with $G[A]$ and $G[B]$ connected and every vertex of $B$
adjacent to some vertex of $A$, and set $E_L=E(A)\cup E(A,B)$,
$E_R=E(B)$. Then $s=|B|-1$, and consequently the conditional ground-state
matrix has min-plus factorization rank and tropical rank exactly
$2^{\,|B|-1}$.
\end{corollary}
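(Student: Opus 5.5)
We compute $s=\dim C-\dim C_L-\dim C_R$ for $C=B_G$ directly, then apply Theorem~\ref{thm:main}. Since $G$ is connected, $\dim B_G=n-1$: the map $S\mapsto\delta(S)$ from subsets of $V$ is linear over $\F_2$ with kernel $\{\emptyset,V\}$.

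\emph{The subcode $C_R$.} It consists of the cuts $\delta(S)$ with $\delta(S)\cap E_L=\emptyset$, that is, cuts avoiding every edge of $E(A)\cup E(A,B)$. Avoiding $E(A)$ with $G[A]$ connected forces $A\subseteq S$ or $A\cap S=\emptyset$; after complementing $S$ we may assume $A\cap S=\emptyset$. Each $b\in B$ has a neighbour $a\in A$, and the edge $ab\in E(A,B)$ must be uncut, so $b\notin S$. Hence $S=\emptyset$, the cut is empty, and $\dim C_R=0$.

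\emph{The subcode $C_L$.} It consists of the cuts avoiding $E(B)$. Since $G[B]$ is connected, $B\subseteq S$ or $B\cap S=\emptyset$; complementing if necessary, we may take $B\cap S=\emptyset$. Thus $C_L=\{\delta(S):S\subseteq A\}$. The map $S\mapsto\delta(S)$ restricted to subsets of $A$ is injective, because its kernel inside $2^A$ is $\{\emptyset,V\}\cap 2^A=\{\emptyset\}$ (here $B\neq\emptyset$ is implicitly needed; if $B=\emptyset$ the formula degenerates). Conversely, every such $\delta(S)$ lies in $C_L$, since an edge inside $B$ has neither endpoint in $S$. Therefore $\dim C_L=|A|$.

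\emph{Conclusion.} Combining the three dimensions gives
\[
s=(n-1)-|A|-0=|B|-1 .
\]
Theorem~\ref{thm:main} then gives min-plus factorization rank and tropical rank exactly $2^{|B|-1}$.

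The only delicate step is the injectivity count for $C_L$. One must check that distinct subsets $S\subseteq A$ give distinct cuts, and that the complementation normalization loses nothing: each cut $\delta(S)=\delta(V\setminus S)$ has exactly one representative with $S\cap B=\emptyset$. Connectivity of $G$ handles the first point and $B\neq\emptyset$ handles the second.
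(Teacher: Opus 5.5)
Your proof is correct and follows the paper's argument essentially step for step: $C_R=0$ because connectivity of $G[A]$ puts all of $A$ on one side and the $A$--$B$ adjacencies force every vertex of $B$ onto that side; and $\dim C_L=|A|$ because connectivity of $G[B]$ lets you normalize to $S\subseteq A$, where $S\mapsto\delta(S)$ is injective since $A\neq V$. One small attribution slip: injectivity on subsets of $A$ needs both connectivity of $G$ and $B\neq\emptyset$, as your main text says, not connectivity alone as your closing paragraph suggests.
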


\begin{IEEEproof}
\emph{$C_R=0$:} suppose $\delta(S)\subseteq E(B)$ with
$S\notin\{\emptyset,V\}$. No cut edge is in $E(A,B)$, so each $b\in B$ lies
on the same side as each of its neighbours in $A$; no cut edge lies in
$E(A)$ and $G[A]$ is connected, so all of $A$ lies on one side; hence every
$b\in B$ lies on $A$'s side, and $S$ is trivial---a contradiction.
\emph{$\dim C_L=|A|$:} if $\delta(S)\cap E(B)=\emptyset$ then, since $G[B]$
is connected, $S\cap B\in\{\emptyset,B\}$, so up to complementation
$S=T\subseteq A$; the map $T\mapsto\delta(T)$ on subsets of $A$ is linear
with trivial kernel (only $\delta(\emptyset)$ and $\delta(V)$ vanish, and
$A\ne V$), so its image has dimension $|A|$. Hence
$s=(n-1)-|A|-0=|B|-1$.
\end{IEEEproof}

Note that the connectivity of $G[A]$ is genuinely needed for $C_R=0$: if
$G[A]$ is disconnected one can have single-edge cuts inside $E(B)$ (this
occurs already on trees), and indeed the hypotheses force the cycle rank of
$G$ to be at least $|B|-1$, so trees satisfy them only degenerately.

\begin{example}[Petersen graph]\label{ex:petersen}
Let $G$ be the Petersen graph with $A$ the outer $5$-cycle and $B$ the
inner pentagram, all five spokes present. Then $n=10$, $\dim B_G=9$,
$\dim C_L=5$, $C_R=0$, and $s=|B|-1=4$: the conditional ground-state matrix
across the cut $E_L=\text{outer}\cup\text{spokes}$, $E_R=\text{inner}$ has
min-plus rank exactly $16$. The transversal block of Lemma~\ref{lem:block}
consists of sixteen full cuts, one per state class, with zero diagonal and
off-diagonal entries $\ge 1$. The complementary cut ($E_L=\text{outer}$,
$E_R=\text{spokes}\cup\text{inner}$) has $\dim C_L=0$, $\dim C_R=5$ and the
same $s=4$. For orientation: $B_G$ has $2^{6}=64$ cosets in $\F_2^{15}$,
with coset-leader weight distribution $\{0{:}1,\ 1{:}15,\ 2{:}45,\ 3{:}3\}$,
so frustration indices range over $0,\dots,3$.
\end{example}

\begin{remark}[computational verification]\label{rem:verify}
The block construction, the dimension counts, and the
diagonal/off-diagonal pattern of Lemma~\ref{lem:block} were verified
exhaustively by computer for: the Petersen graph at both cuts above
(including the coset census of Example~\ref{ex:petersen}); the planar prism
$C_4\times K_2$; and random connected graphs on $7$ vertices with
\emph{random, unstructured} edge bipartitions, confirming the general
Theorem~\ref{thm:main} beyond the structured hypotheses of
Corollary~\ref{cor:spoked}. Code is available from the author on request.
\end{remark}

\section{Local Triviality Versus Global Rank}

On high-girth graphs the exponential rank of Corollary~\ref{cor:spoked}
coexists with complete local featurelessness of the signings themselves.
Recall that two signings are \emph{switching-equivalent} if they differ by
a cut $\delta(S)$ (Harary~\cite{Harary53}; Zaslavsky~\cite{Zaslavsky82});
switching-invariant quantities are exactly the functions of the class
$\omega\oplus B_G$, and $\ell$ is one such.

\begin{proposition}\label{prop:local}
Let $G$ have girth $g$, let $v\in V$, and let $r\ge 0$ satisfy $2r+1<g$.
Then the subgraph induced on the ball $B_r(v)$ is a tree, and consequently
every signing of $G$, restricted to $B_r(v)$, is switching-equivalent to
the all-positive signing by switchings supported inside the ball.
\end{proposition}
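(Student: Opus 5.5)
We prove the two assertions of Proposition~\ref{prop:local} in order: first that $G[B_r(v)]$ is acyclic, then that every signing on a forest can be switched to all-positive inside the ball.

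\emph{Acyclicity.} Suppose $G[B_r(v)]$ contains a cycle $Z$. Fix a breadth-first-search tree $T_v$ of $G$ rooted at $v$, truncated at depth $r$. Every vertex of the ball is joined to $v$ by a tree path of length at most $r$, and these paths lie inside the ball. Since $Z$ is a cycle in $G[B_r(v)]$, not all of its edges can belong to $T_v$. So some edge $e=\{a,b\}$ of $G[B_r(v)]$ is a non-tree edge.

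Take the tree paths $P_a$ from $v$ to $a$ and $P_b$ from $v$ to $b$, and let $w$ be their last common vertex. The segment of $P_a$ from $w$ to $a$, the edge $e$, and the segment of $P_b$ from $b$ back to $w$ together form a closed walk. This walk is a genuine cycle: the two segments share only $w$, and $e$ is not a tree edge. Its length is at most $|P_a|+|P_b|+1\le 2r+1<g$, which contradicts girth $g$.

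The one point needing care is that $e$ is not a tree edge. If it were, it would lie on one of the tree paths and create no cycle, so choosing $e\notin T_v$ handles this. This is the main (mild) obstacle; everything else is routine. Hence $G[B_r(v)]$ is a forest, indeed a tree, since it is connected through $v$.

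\emph{Switching.} Given $\omega$ restricted to the edges of the tree $H=G[B_r(v)]$, we define spins on the ball by propagating from $v$. Set $\varepsilon(v)=+1$, and for each tree edge $\{a,b\}$ with $a$ the parent, set $\varepsilon(b)=\varepsilon(a)\sigma(ab)$. Because $H$ has no cycles, this is well defined. Every edge of $H$ then satisfies $\varepsilon(a)\sigma(ab)\varepsilon(b)=+1$.

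Let $S=\varepsilon^{-1}(-1)\subseteq B_r(v)$. We then have $\omega|_{E(H)}\oplus\delta(S)|_{E(H)}=0$, so switching by $S$ makes the restricted signing all-positive. The switching is supported inside the ball, because $S\subseteq B_r(v)$.

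Equivalently, the cut space of a tree $H$ has dimension $|V(H)|-1=|E(H)|$, so it equals all of $\F_2^{E(H)}$ and every restricted signing lies in the trivial class. We will note this as the coding-theoretic phrasing.
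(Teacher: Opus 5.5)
Your proof is correct and takes essentially the same route as the paper: a non-BFS-tree edge inside the ball would close a cycle of length at most $2r+1<g$, and on the resulting tree you switch outward from the root. Your spin propagation $\varepsilon(b)=\varepsilon(a)\sigma(ab)$ is the paper's vertex-by-vertex switching packaged as a single set $S$, and your closing remark that the cut space of a tree is all of $\F_2^{E(H)}$ is an equivalent phrasing of the paper's parenthetical appeal to balance of forests.
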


\begin{IEEEproof}
If $xy$ were an edge inside $B_r(v)$ other than an edge of a breadth-first
search tree from $v$, the two tree paths from $v$ to $x$ and $y$ together
with $xy$ would contain a cycle of length at most $2r+1<g$, a
contradiction; so the induced ball is the BFS tree. On a tree, process
vertices outward from the root: switching a vertex flips the sign of its
parent edge, so all edge signs can be set positive. (Equivalently: forests
are balanced, and balanced signings are
switching-trivial~\cite{Harary53,Zaslavsky82}.)
\end{IEEEproof}

\begin{corollary}\label{cor:blind}
Fix $r$ with $2r+1<g$. Every statistic of a signing that is a function of
the switching classes of its radius-$r$ balls is constant across all
$2^{|E|}$ signings of $G$. In particular, for the families of
Corollary~\ref{cor:spoked} with girth $g>2r+1$, all radius-$r$ local
information about the coupling pattern is void, while the conditional
ground-state matrix across the cut has the maximum possible min-plus rank
$2^{\,|B|-1}$ permitted by its state complexity.
\end{corollary}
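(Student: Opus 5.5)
The plan is to obtain the first assertion directly from Proposition~\ref{prop:local}, and the second by combining that with Corollary~\ref{cor:spoked} and Theorem~\ref{thm:main}.

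For the first assertion, fix $r$ with $2r+1<g$ and a signing $\omega\in\F_2^E$. For each vertex $v$, Proposition~\ref{prop:local} says that the restriction of $\omega$ to the induced ball $B_r(v)$ is switching-equivalent, by switchings supported inside the ball, to the all-positive signing. So the switching class of $\omega|_{B_r(v)}$ is the class of the zero signing on $B_r(v)$ for every $v$. In particular, the tuple $\bigl([\omega|_{B_r(v)}]\bigr)_{v\in V}$ of radius-$r$ switching classes is the same tuple for every one of the $2^{|E|}$ signings. Any statistic that factors through this tuple is therefore constant.

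The one point to state carefully is the notion of a switching class on a ball. I take it to be the class of the restricted signing modulo cuts of the induced subgraph $G[B_r(v)]$, possibly as a rooted or labelled object. Since every class involved is the trivial one, how the classes are labelled or compared between different vertices does not matter.

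For the second assertion, fix a family satisfying the hypotheses of Corollary~\ref{cor:spoked}, so $G[A]$ and $G[B]$ are connected and every vertex of $B$ has a neighbour in $A$, and assume in addition that the girth exceeds $2r+1$. The previous paragraph applies unchanged, so all radius-$r$ local information is void. Corollary~\ref{cor:spoked} gives $s=|B|-1$ and shows the conditional ground-state matrix has min-plus rank exactly $2^{|B|-1}$. Proposition~\ref{prop:doors} together with Theorem~\ref{thm:main} shows that $2^{s}$ is both an upper bound for every code and cut and the value actually attained. Hence this rank is the maximum permitted by the state complexity.

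I see no real obstacle, since each piece was proved earlier. The only care needed is to read ``maximum possible'' as ``equal to the upper bound $2^{s}$ of Proposition~\ref{prop:doors},'' rather than as a maximum over all cuts.
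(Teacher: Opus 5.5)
Your proposal is correct and follows the same route the paper intends. The paper gives no separate proof: the first assertion is immediate from Proposition~\ref{prop:local}, since every ball's switching class is trivial. The second combines this with Corollary~\ref{cor:spoked} and Theorem~\ref{thm:main}, reading ``maximum possible'' as attaining the upper bound $2^{s}=2^{|B|-1}$ of Proposition~\ref{prop:doors}, exactly as you do.
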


The pairing quantifies a familiar physical intuition: for frustration-type
objectives on high-girth graphs, conditional values are globally maximally
incompressible in the min-plus model even though locally there is,
provably, nothing to see.

\section{Concluding Remarks}

\begin{remark}[rank is not hardness]\label{rem:planar}
The prisms $C_k\times K_2$ satisfy the hypotheses of
Corollary~\ref{cor:spoked} with $|B|=k$, so their conditional ground-state
matrices have min-plus rank $2^{\,k-1}$---yet these graphs are planar, and
ground states of planar Ising systems are computable in polynomial
time~\cite{Barahona82}. Exponential conditional rank is therefore fully
compatible with polynomial-time solvability: the rank measures how
compressible the \emph{table of conditional optimal values} is in the
min-plus model, i.e., the unavoidable width of any min-plus message across
the cut, and a polynomial-time algorithm is free to compute with the
instance by other means (for planar Ising, via $T$-joins and matchings).
Conversely, computing $\ell$ on general graphs is
NP-hard~\cite{Barahona82}, while bounded factor rank of the relevant
matrices makes related optimization problems efficient~\cite{Shitov14}. We
flag this explicitly so that Theorem~\ref{thm:main} is not mistaken for a
lower bound on computation.
\end{remark}

\begin{remark}[an exactly solvable family for tropical rank]
Deciding whether the Barvinok rank of a matrix is at most a fixed constant
is NP-hard~\cite{Shitov14}, and exact evaluations on natural structured
families are correspondingly scarce. Theorem~\ref{thm:main} provides an
infinite family of $\{0,1,2,\dots\}$-valued matrices---doubly exponential
in size, described by polynomial-size data $(C,L,R)$---whose Barvinok,
tropical and Kapranov ranks are all known exactly and coincide.
\end{remark}

\begin{remark}[rank profiles]
Running the cut through all positions $i=0,\dots,m$ of a coordinate
ordering turns Theorem~\ref{thm:main} into the statement that the min-plus
rank profile of the conditional decoding matrices equals the
state-complexity profile of the code. All results on dimension/length
profiles and on optimizing state complexity over coordinate
permutations~\cite{Forney94,KTFL93,Vardy98} therefore transfer verbatim to
statements about tropical ranks.
\end{remark}

\begin{remark}[open questions]
(i) Group codes and codes over rings: the crossing argument is
alphabet-free, but the fiber invariance uses linearity; how far does the
identity extend? (ii) Approximate factorization: how does the min-plus rank
of $W$ degrade if entries may be computed within additive error
$\varepsilon$, and is there a robust analogue of the state complexity?
(iii) Beyond the Hamming metric: for which weight functions on $\F_2^m$
does conditional-value rank still equal a state-space count?
\end{remark}

\section*{Acknowledgment}
This work was developed with the assistance of large language models
(Anthropic's Claude), used for derivation, drafting, literature search, and
computational verification. The mathematical content and its correctness
remain the sole responsibility of the author.

\end{document}